\definecolor{lavender}{rgb}{0.75, 0.58, 0.89}
\newlength\myindent
\newcommand\bindent[1][\myindent]{%
  \begingroup
  \setlength{\itemindent}{#1}
  \addtolength{\algorithmicindent}{#1}
}
\newcommand\eindent{\endgroup}
\begin{document}
\title{Analysis of LFT2}

\author{Geunwoo Lim,
Yujin Kwon,
Yongdae Kim}
\authorrunning{G. Lim et al.}

\institute{ Korea Advanced Institute of Science and Technology, Republic of Korea \and
\email{\{woo4303,dbwls8724,yongdaek\}@kaist.ac.kr}}
\maketitle              

\begin{abstract}
For a decentralized and transparent society, blockchain technology has been developed. 
Along with this, quite a few consensus algorithms that are one of core technologies in blockchain have been proposed. 
Among them, we analyze a consensus algorithm called LFT2, which is used by a blockchain system, ICON. 
We first formulate the LFT2 consensus algorithm and then analyze safety and liveness, which can be considered as the most important properties in distributed consensus system. 
We prove that LFT2 satisfies safety and liveness, where a certain assumption is required to prove liveness.
In addition, we compare LFT2 with two similar consensus algorithms, and from the comparison, we show that a trade-off exist among the three consensus algorithms.
Finally, we simulate LFT2 to measure a liveness quality.
\end{abstract}

\section{Introduction}

For a decentralized and transparent society, blockchain technology has been developed. 
The first blockchain system, Bitcoin~\cite{Nakamoto_bitcoin:a}, suggested a peer-to-peer electronic cash system using a proof of work (PoW) consensus algorithm.
In this system, nodes in a peer-to-peer network manage a distributed ledger called a blockchain in which transactions are stored. 
Each node writes latest transactions on a new block, which may be a part of blockchain, and then they propagate their block to other nodes. 
Next, each node determines whether to agree (or vote) on the received block, and the block is connected to the existing blockchain when enough votes are collected.
After that, a new round starts, and the above process is repeated. 
This process is conducted according to a consensus protocol.
Note that only transactions recorded on the blockchain are regarded as valid. 
Therefore, to ensure the security, it's important for the nodes to agree on the same ledger. 
In fact, due to a block propagation delay and the existence of attackers, nodes can have different views on blockchain.
In this case, nodes should resolve this, or the protocol should ensure that this never occurs.
If not, it allows an attacker to make an invalid transaction such as a double-spending transaction, and this undermines the system security significantly.
To resolve the above case, each node shouldn't vote for conflicting blocks.
Here, conflicting blocks have two types: 1) The first type is a block including conflicting transactions such as double spending transactions or invalid transactions. 
2) When two or more different blocks are committed in the same round, these blocks belong to the second type. 
The way to resolve conflicting blocks depends on each consensus protocol.

\textit{Safety} and \textit{liveness} can be considered as the most important properties of a consensus algorithm. 
A consensus algorithm should satisfy safety, which means the consensus algorithm doesn't commit conflicting blocks. 
Also, a consensus algorithm should satisfy liveness, which means the algorithm eventually extends a blockchain by adding a block. 
However, by FLP impossibility~\cite{fischer1982impossibility}, these cannot be satisfied in asynchronous network situation at the same time. 
Thus, each consensus algorithm has to choose which property to sacrifice. 
For example, the Nakamoto consensus sacrifices the safety property (liveness over safety) while a BFT-based consensus sacrifices the liveness property (safety over liveness).

Currently, many consensus algorithms including PoW and Byzantine Fault Tolerance (BFT) based consensus algorithms exist. 
In this paper, we focus on BFT-based consensus algorithms, and specifically, we analyze the LFT2~\cite{LFT2} consensus algorithm. 
We first model the LFT2 consensus algorithm and formalize it using a state machine. 
Then we prove that LFT2 satisfies safety and liveness properties in certain assumptions. 
Note that according to FLP impossibility, we cannot prove LFT2 satisfies both properties without any assumption. 
In addition, we define a metric called $\gamma-function,$ which can represent a liveness quality (i.e., a specific rate of creating a new block).  
We also compare the LFT2 consensus algorithm with two other BFT based consensus algorithms, PBFT~\cite{castro1999practical} and Hotstuff~\cite{yin2018hotstuff}. 
From this comparison, we find out trade-offs among these consensus algorithms.
Finally, we simulate LFT2 to measure a liveness quality using our metric, $\gamma-function.$

In summary, this paper makes the following contributions:

\begin{itemize}
    \item We formalize the LFT2 consensus algorithm.
    \item We prove the LFT2 consensus algorithm satisfies liveness and safety under certain assumptions.
    \item We simulate LFT2 to measure a liveness quality.
\end{itemize}

\section{Background}

\subsection{Practical Byzantine Fault Tolerance}
Practical Byzantine Fault Tolerance (PBFT)~\cite{castro1999practical} is a BFT-based consensus algorithm. 
Basically, the paper~\cite{castro1999practical} proposed it to prevent byzantine failures in a replication system. 
To put it simply, PBFT is a practical algorithm for consensus of a distribution system where at most $f$ byzantine nodes out of $3f+1$ nodes can exist in an asynchronous system.
PBFT is represented in Figure~\ref{PBFT_image}. 

Since PBFT is designed for a distributed replication system, it needs request and reply phases. 
However, because both phases aren't needed for consensus, PBFT would have only pre-prepare, prepare, and commit phases for consensus in a blockchain. 
Each round has a designated leader, and a leader proposes a new block and sends it to all other nodes at the start of round, where the new block would be what nodes should vote for. 
When the leader proposes a new block, pre-prepare phase starts.
After sending this process, prepare phase starts. In prepare phase, all nodes receive the block and check it. 
If the block is valid, then each node has to broadcast \say{Prepare} message to other nodes. 
While broadcasting the message, a node can receive Prepare message from other nodes. 
If a node receives $2f+1$ Prepare message, the node will be in \say{prepared} state. 
In commit phase, if a node is prepared state, each node broadcasts \say{Commit} message. 
After broadcasting, each node will receive Commit message, and if the number of Commit messages gotten from different nodes is greater or equal to $2f+1$, then the node accepts the block proposed by the leader. 
In this case, we state that the block is \textit{committed}.

\begin{figure}[h]
    \centering
    \includegraphics[width=\columnwidth]{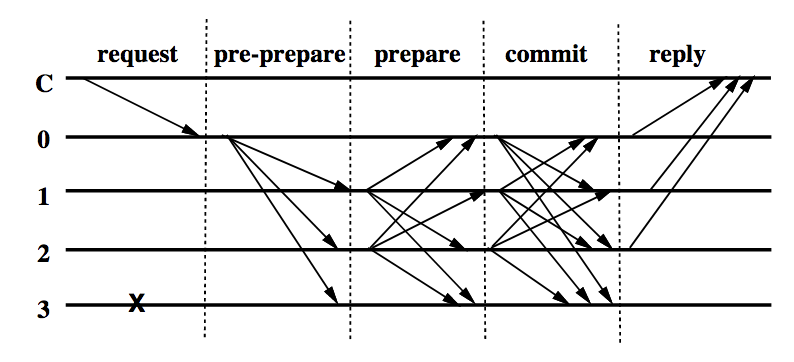}
    \caption[Practical Byzantine Fault Tolerance consensus description]{PBFT algorithm. This figure is from~\cite{castro1999practical}.
    } \label{PBFT_image}
\end{figure}

In addition, there is a view-change step in PBFT when a leader fails to send a new block to other nodes. 
If view-change occurs, the next leader sends another new block to other nodes. 
Because this step is a bit complicated, we don't describe this here. 
For more details, please refer the paper~\cite{castro1999practical}.

\subsection{Hotstuff}

Hotstuff is a leader-based BFT replication protocol published in 2018 by M.Yin et al~\cite{yin2018hotstuff}. 
A big difference between this protocol and PBFT is leader dependency. 
In PBFT, a leader is crucial only in pre-prepare phase, 
but in Hotstuff, the leader is crucial in every phase. 
This is because, in Hotstuff, each node sends a message to only a leader, and the leader should propagate the message to other nodes in every phase. 
This is a major difference between PBFT and Hotstuff, and this is why we refer Hotstuff to as a leader-based BFT protocol. 
Since each node doesn't directly broadcast a vote message to other nodes, the leader node can make some malicious actions. 
To prevent this, Hotstuff uses quorum certificate (QC), which is needed for proving that the leader receives $2f+1$ correct vote messages.

The basic algorithm of Hotstuff is represented in Figure \ref{Hotstuff_image}. 
Similar to PBFT, Hotstuff can successfully make nodes achieve consensus when at most $f$ number of faulty nodes exist in the system where the total number of nodes is $3f+1$. 
In prepare phase, each node sends a new-view message to a leader, and the leader receives $2f+1$ new-view messages. 
Then the leader sends prepare message to all nodes in the system, and each node receives the message. 
In pre-commit phase, each node checks the received prepare message, and if the message is valid, then it sends the prepare vote message to the leader. If the leader receives $2f+1$ prepare vote messages, then the leader sends pre-commit message to each node. 
In commit phase, each node validates the pre-commit message and if the message is proper, then it sends pre-commit vote message to the leader. 
If the leader receives $2f+1$ pre-commit vote messages, then the leader sends commit message to each node. 
In decide phase, each node checks the commit message gotten from the leader, and if the message is proper, then a node sends commit vote message to the leader. 
Similar to the previous phases, if the leader receives $2f+1$ commit vote messages, the leader makes a decide message and sends to each node. 
In all the above steps, a leader can be a byzantine node for some reasons. 
Each node can send a new-view message to the next leader when the node judges the current leader node is byzantine. 
If the next leader receives $2f+1$ new-view messages, then the new leader starts consensus by repeating the above process. 
In fact, this new-view message process is similar to a view-change process in PBFT.

\begin{figure}[h]
    \centerline{\includegraphics[width=\columnwidth]{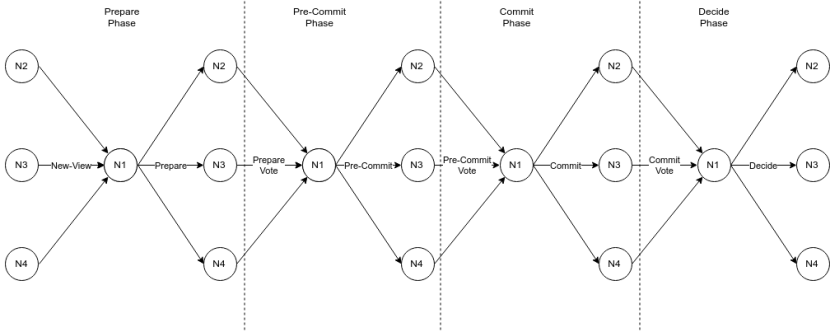}}
    \caption[Hotstuff consensus description]{Hotstuff algorithm. This figure is from~\cite{hotstuffimage}
    } \label{Hotstuff_image}
\end{figure}

As shown in Figure~\ref{Hotstuff_image}, we can see that each round is symmetric.
Thus, considering this characteristic, the Hotstuff whitepaper~\cite{yin2018hotstuff} suggests an advanced Hotstuff called Chained Hotstuff, which has higher scalability. 
Chained Hotstuff can be simply explained as a pipelining version of Hotstuff. 
This is described in Figure~\ref{Chained_Hotstuff_image}. 
In each round, a leader node makes a message linked with the previous block, and sends it to other nodes. 
Each node receives the message from the leader and then verifies the message. 
The received messages include four parts, and each node has to check the whole four parts of the messages. 
In summary, Chained Hotstuff uses pipelining in each round, and as a result, this obtains scalability without losing the security since consensus steps aren't reduced. 

\begin{figure}[h]
    \centering
    \includegraphics[width=\columnwidth]{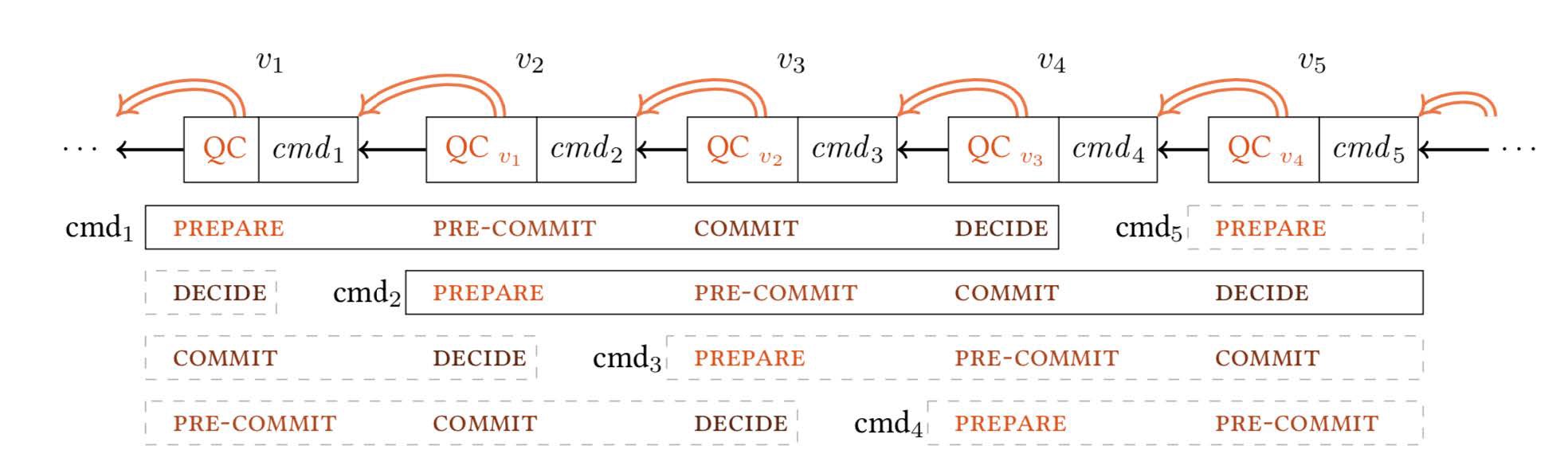}
    \caption[Chained Hotstuff Pipelining]{Chained Hotstuff does pipelining using a symmetric structure. 
    This figure is from~\cite{yin2018hotstuff}
    } \label{Chained_Hotstuff_image}
\end{figure}

\subsection{LFT2}

The basic algorithm of LFT2 is similar to PBFT. 
A leader makes a new block and broadcasts it to other nodes. 
Each node knows a leader of each round. 
Thus, each node can check that the received block is proposed by a proper (or valid) leader. 
If the received block doesn't have any error, then each node makes and broadcasts a vote message. 
If a node gets enough vote messages, then the block would be a candidate block and the previous candidate block becomes a committed block. 
Figure~\ref{LFT2_image} represents LFT2.

\begin{figure}[h]
    \centering
    \includegraphics[width=\columnwidth]{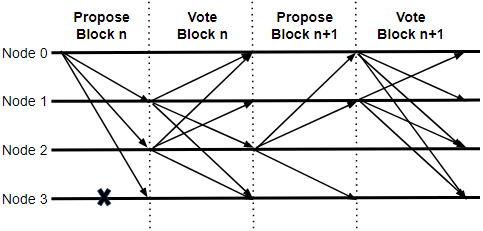}
    \caption[LFT2 consensus description]{LFT2 algorithm. Block $n$ will be a candidate block after voting for block $n$, and it's finally committed after voting for block $n+1$. 
    Obviously, after Vote Block $n+1$ phase, block $n+1$ will be a candidate block.
    } \label{LFT2_image}
\end{figure}

To explain a specific algorithm of LFT, assumptions and rules of LFT2 are required. 
In a LFT2 consensus algorithm, there are two types of nodes, honest nodes and byzantine nodes. 
These nodes do following actions.

\begin{itemize}
    \item General node
    \begin{itemize}
        \item If a node receives a message, it sends the message to neighbor nodes at a specific time (gossip communication).
        \item A leader selection algorithm exists and each node knows the order.
        \item Each node has a local timer.
        \item Each node already knows a cipher suite, and every message includes digital signature.
    \end{itemize}
    \item Byzantine node
    \begin{itemize}
        \item A byzantine node can delay or may not send a message.
        \item A byzantine node can send different messages to different nodes.
        \item A byzantine node cannot generate other node’s digital signature.
    \end{itemize}
\end{itemize}

In LFT2, there exist two steps, propose and vote. 
In propose step, a leader proposes a new block and broadcasts to other nodes. 
In vote step, each node checks the received block and sends a vote message to other nodes. 
The basic rule of LFT2 is as follows. 

\begin{itemize}
    \item ProposeTimer
    \begin{itemize}
        \item If propose step starts, ProposeTimer works.
        \item If ProposeTimeout occurs, a failure vote progresses.
    \end{itemize}
    \item VoteTimer
    \begin{itemize}
        \item In vote step, if enough vote messages arrive but consensus isn't completed, Votetimer works.
        \item If consensus completes within a fixed time, VoteTimer stops.
        \item If consensus doesn't complete within a fixed time, VoteTimeout occurs.
    \end{itemize}
    \item Propose
    \begin{itemize}
        \item A leader makes only one block in one round.
    \end{itemize}
    \item Vote
    \begin{itemize}
        \item A validator receives a new block from the leader.
        \item Validators check the block information (if it's proposed by a proper leader, has the correct previous hash, and is connected with the candidate block, etc).
        \item If the block is valid, validators send a vote message to other nodes.
    \end{itemize}

    \item Candidate
    \begin{itemize}
        \item If a node receives enough votes for a block of a higher round or a block with a higher height than the candidate block that the node views, the node changes this block to a candidate block.
    \end{itemize}
    \item Commit
    \begin{itemize}
        \item A node commits the previous candidate block when replacing the current candidate block with a new candidate block.
    \end{itemize}
\end{itemize}

We define states and transitions for each node based on above rules in Section~\ref{chap:formalization}.

\section{System Formalization}
\label{chap:formalization}

\subsection{Definitions}

In this section, we formalize the LFT2 protocol. 
This formalization would help to understand the LFT2 consensus protocol and prove \textit{safety} and \textit{liveness.} 
Before the formalization, we first define variables. 
For the network formalization, $\mathcal{M}$ represents a message, which transmits from a node to other nodes, and $N$ indicates the number of nodes in the system. 
For the node formalization, we define two variables: $in$ and $out$. 
The parameter $in$ indicates a message set including valid new information, and $out$ indicates a message set, which a node has to send to other nodes. 
The parameter $h$ indicates a block height, $T$ and $V$ are a set of collected transactions and a set of votes for a candidate block, respectively. 
Our model follows almost the paper~\cite{castro1999correctness}. 

\subsection{The Multicast Channel Automaton}

\begin{algorithm}[h]
\begin{algorithmic}
\caption{\textbf{State}}
\STATE $net \subseteq \mathcal{M} \times 2^N$, initially \{\}
\end{algorithmic}
\end{algorithm}

\begin{algorithm}[h]
\begin{algorithmic}
\caption{\textbf{Transitions}}
\STATE SEND$(m,X)${
\bindent
    \STATE Eff: $net:=net\cup \{(m,X)\}$ 
\eindent}
\STATE RECEIVE$(m)_x${
\bindent
    \STATE Pre: $\exists (m,X)\in net : (x\in X$
    \STATE Eff: $net:=net-\{(m,X)\}\cup \{(m,X-{x})\})$
\eindent}
\STATE MISBEHAVE$(m,X,X')${
\bindent
    \STATE Pre: $(m,X)\in net $
    \STATE Eff: $net:=net-\{(m,X)\}\cup \{(m,X')\})$
\eindent}
\end{algorithmic}
\end{algorithm}

In this section, we model a network state and transitions of a blockchain system. 
The variable $net$ represents a state of network, and $2^N$ is a vector that represents whether each node received or not received a message. 
SEND$(m,X)$ means sending message $m$ to node set $X$, so state $net$ should include $(m,X)$ after SEND$(m,X).$
RECEIVE$(m)_x$ means that node $x$ receives message $m,$ so we should update $net$ state as follows. 
Because node $x$ received the message according to RECEIVE$(m)_x,$ we remove $(m,X)$ from $net$ and add $(m,X-x)$ to $net$.
MISBEHAVE$(m,X,X')$ is similar with a RECEIVE process. 
The difference is because of misbehavior, the receiving node set will be $X'$ instead of $X.$

\subsection{The Replica Automaton}

\begin{algorithm}[h]
\begin{algorithmic}
\caption{\textbf{Auxiliary functions}}
\STATE $proposer(h)$ $\rightarrow $ block height (input), proposer node ID (output)
\STATE $check\_hash(B)$ $\rightarrow $ block info (input), boolean (output)
\STATE $node\_candidate\_block(i)$ $\rightarrow $ node ID (input), the input node's candidate block info (output)
\STATE $block\_candidate\_block(B)$ $\rightarrow $ block info (input), the input block's candidate block info (output)
\STATE $block\_height(B)$ $\rightarrow $ block info (input), the input block's height (output)
\STATE $node\_height(i)$ $\rightarrow $ node info (input), the input node's candidate block height (output)
\STATE $same\_vote(in)$ $\rightarrow $ the largest set of the same votes in $in$
\end{algorithmic}
\end{algorithm}

These auxiliary functions use for representing below transitions. 
Each function has own input and output values.
For example, $proposer(h)$ function has block height $h$ an an input and outputs the proposer node ID for $h.$ 
This function can use to check if the proposer (leader) is valid at block height $h$. 
The function $check\_hash(B)$ has block information $B$ as an input and outputs boolean. 
This function checks the correctness of hash value of block $B$ and returns if the value is true or false. 
Lastly, $same\_vote(in)$ means a maximal set of same votes in $in$. 
This is used for checking whether enough votes are collected.

\subsection{The Replica Automaton}

\begin{algorithm}[h]
\begin{algorithmic}
\caption{\textbf{Input transitions}}
\label{alg:in}
\STATE RECEIVE($\langle $NEW-BLOCK$,B(h,T,V)\rangle_{\sigma_j})_i${
\bindent
    \STATE Pre: $j=proposer(h)\land height_i=h-1\land check\_hash(B)\land state_i=ready\land$
    \bindent[1.45cm] 
    \STATE $node\_candidate\_block(i)=block\_candidate\_block(B)$ \eindent
    \STATE Eff: $in_i:=in_i\cup \{\langle $NEW-BLOCK$,B(h,T,V)\rangle\}$
    \bindent[1.4cm]
    \STATE $state_i=process$
    \STATE $SEND(\{\langle $VOTE$,V(N,B',B)\rangle\})$\eindent
\eindent}
\STATE RECEIVE($\langle $VOTE$,V(N,B',B)\rangle_{\sigma_j})_i${
\bindent
    \STATE Pre: $(\langle $NEW-BLOCK$,B(h,T,V)\rangle\in in_i \land \langle $VOTE$,V(N,B',B)\rangle_{\sigma_j}\notin in_i)\;or$
    \bindent[1.45cm] 
    \STATE $(block\_height(B)>node\_height(i)\land $VOTE$,V(N,B',B)\rangle_{\sigma_j}\notin in_i)$ \eindent
    \STATE Eff: $in_i:=in_i\cup\{\langle $VOTE$,V(N,B',B)\rangle_{\sigma_j}\}$
\eindent}
\STATE RECEIVE($\langle $TIMEOUT$,BLANK)_i${
\bindent
    \STATE Pre: $\langle $NEW-BLOCK$,B(h,T,V)\rangle\notin in_i \land \langle $TIMEOUT$,BLANK\rangle_{\sigma_j}\notin in_i$
    \STATE Eff: $in_i:=in_i\cup \{\langle $TIMEOUT$,BLANK\rangle_{\sigma_j}\rangle\}$
\eindent}
\end{algorithmic}
\end{algorithm}

\begin{algorithm}[h]
\begin{algorithmic}
\caption{\textbf{Output Transition}}
\label{alg:out}
\STATE SEND($m$)$_i${
\bindent
    \STATE Pre: $m\in out_i$
    \STATE Eff: $out_i:=out_i-\{m\}$
\eindent}
\end{algorithmic}
\end{algorithm}

The above transitions represent state transitions when there exist new inputs or outputs. 
Input transitions occur when a message is received from other nodes, and output transitions occur when a message should be sent to other nodes. 
In Algorithms~\ref{alg:in}, \ref{alg:out}, and \ref{alg:internal}, Pre indicates conditions required for the corresponding transition, and Eff indicates the result of the transitions. 
For example, RECEIVE($\langle $NEW-BLOCK$,B(h,T,V)\rangle_{\sigma_j})_i$ means that node $i$ receives message $\langle $NEW-BLOCK$,B(h,T,V)\rangle$ from node $j,$ and it should satisfy conditions in Pre (see Algorithm~\ref{alg:in}). 
The result of the transition is represented in Eff. 
RECEIVE($\langle $NEW-BLOCK$,B(h,T,V)\rangle_{\sigma_j})_i$ occurs when a NEW-BLOCK message is arrived.
RECEIVE($\langle $VOTE$,V(N,B',B)\rangle_{\sigma_j})_i$ occurs when a VOTE message is arrived, and RECEIVE($\langle $TIMEOUT$,BLANK)_i$ occurs when a TIMEOUT message is arrived.
As another example, we describe SEND$(m)_i$ represented in Algorithm~\ref{alg:out}. 
This means that node $i$ sends message $m$ to other nodes, so $m$ should be in $out_i$. 
After $SEND(m)_i$ occurs, $m$ should be removed from $out_i$. 

\begin{algorithm}[h]
\begin{algorithmic}
\caption{\textbf{Internal transitions}}
\label{alg:internal}
\STATE SEND\_BLOCK($h,T,V$)$_i${
\bindent
    \STATE Pre: $proposer(h)=i\land node\_height(i)=h-1 \land state_i=ready$
    \STATE Eff: $out_i:=out_i\cup \{\langle $NEW-BLOCK$,B(h,T,V)\rangle\}$
    \bindent[1.4cm]
    \STATE $SEND(\{\langle $NEW-BLOCK$,B(h,T,V)\rangle\})$\eindent
\eindent}
\STATE SEND\_TIMEOUT$_i${
\bindent
    \STATE Pre: $\{\langle $NEW-BLOCK$,B(h,T,V)\rangle\}\notin in_i$
    \STATE Eff: $out_i:=out_i\cup \{\langle $TIMEOUT$,BLANK\rangle\}$
    \bindent[1.4cm]
    \STATE $SEND(\{\langle $TIMEOUT$,BLANK\rangle\})$\eindent
\eindent}
\STATE COMMIT$_i${
\bindent
    \STATE Pre: $|same\_vote(in_i)|>=2f+1$
    \STATE Eff: $for\;V(N,B',B) \in same\_vote(in_i),$
    \bindent[1.4cm]
    \STATE $commit(B')$
    \STATE $candidate(B)$
    \STATE $node\_height(i):=block\_height(B')$
    \STATE $state_i:=ready$\eindent
\eindent}
\STATE VOTE\_FAIL$_i${
\bindent
    \STATE Pre: $\langle $NEW-BLOCK$,B(h,T,V)\rangle\in in_i\land |same\_vote(in_i)|<2f+1 $
    \STATE Eff: $state_i:=ready$
\eindent}
\STATE VOTE\_TIMEOUT$_i${
\bindent
    \STATE Pre: $\langle $NEW-BLOCK$,B(h,T,V)\rangle\notin in\land |same\_vote(in_i)|>=2f+1$
    \bindent[1.45cm]
    \STATE $(for \;m \in same\_vote(in_i), type(m)=TIMEOUT)$\eindent
    \STATE Eff: $state_i=ready$
\eindent}
\end{algorithmic}
\end{algorithm}

Algorithm~\ref{alg:internal} represents state transitions that occur internally in a node. 
SEND\_BLOCK($h,T,V$)$_i$ occurs when node $i$ is a proposer so the node has to make a new block and send it to other nodes.
SEND\_TIMEOUT$_i$ occurs when a node has to receive NEW-BLOCK message but no NEW-BLOCK message arrived until $propose\_timeout$ occurs. 
COMMIT$_i$ occurs when enough VOTE messages are received, where votes are for committing a new block. 
VOTE\_FAIL$_i$ occurs when $vote\_timeout$ occurs, which means enough VOTE messages aren't received (i.e., the number of elements in $same\_vote(in)$ does not satisfy $2f+1$). 
In this case, consensus is failed. 
Lastly, VOTE\_TIMEOUT$_i$ occurs when enough TIMEOUT messages arrive, which means the leader node doesn't send a NEW-BLOCK message.

\subsection{Consensus Process}

Next, we formalize the LFT2 consensus process by using transitions. 
First, a leader should send a new block to other nodes. 
Other nodes wait for a new block but if it doesn't arrive until $propose\_timeout$ occurs, then send TIMEOUT message and vote about the message. 
On the other hand, a new block arrives in time, then the node sends VOTE message to other nodes. 
In vote phase, if $vote\_timeout$ occurs, then VOTE\_TIMEOUT will execute. 
If $vote\_failure$ occurs, then VOTE\_FAIL will execute, 
and if both $vote\_timeout$ and $vote\_failure$ don't occur and enough votes for the new block message arrive, then COMMIT will execute.
In this case, the round successfully ends. 

\begin{algorithm}[h]
\begin{algorithmic}
\caption{\textbf{Consensus process}}
\STATE SEND\_BLOCK($h,T,V$) (Leader Only){
\IF{$propose\_timeout$ occurs}
    \bindent[1.45cm] 
    \STATE SEND\_TIMEOUT
    \STATE RECEIVE($\langle $TIMEOUT$,BLANK$)
    \STATE VOTE\_TIMEOUT\eindent
\ELSE
    \bindent[1.45cm]
    \STATE RECEIVE($\langle $NEW-BLOCK$,B(h,T,V)\rangle_{\sigma_j}$)
    \STATE SEND($\langle $VOTE$,V(N,B',B)\rangle$)\eindent\ENDIF
\IF{$vote\_timeout$ occurs}
    \bindent[1.45cm] 
    \STATE RECEIVE($\langle $TIMEOUT$,BLANK$)
    \STATE VOTE\_TIMEOUT\eindent
\ELSIF{$vote\_failure$ occurs}
    \bindent[1.45cm] 
    \STATE RECEIVE($\langle $TIMEOUT$,BLANK$)
    \STATE VOTE\_FAIL\eindent
\ELSE
    \bindent[1.45cm] 
    \STATE RECEIVE($\langle $VOTE$,V(N,B',B)\rangle_{\sigma_j}$)
    \STATE COMMIT\eindent\ENDIF}
\end{algorithmic}
\end{algorithm}

\section{Safety and Liveness}

In this chapter, we first define \textit{safety} and \textit{liveness} and prove that LFT2 satisfies safety and liveness under certain conditions. 
In addition, to measure liveness quality of LFT2, we suggest a metric, $\gamma-function,$ which represents an average rate of generating committed blocks. 

\subsection{Safety}
Before proving that the LFT2 satisfies the safety property, we define safety below. 

\begin{definition}{(Safety)}
We state that a blockchain system satisfies \textbf{safety} when a conflicting block never commits.
\end{definition}

The intuitive meaning of satisfying a safety property is conflicting blocks will never commit.
In other words, safety asserts that nothing bad thing happens, where bad things mean that conflicting blocks are committed. 
Conflicting blocks include two types: 
The first type is a block including conflicting transactions such as double-spending transactions or invalid transactions. 
The second type indicates blocks when they are committed at the same height.
In Lemmas~\ref{safetylem1} and \ref{safetylem2}, we prove that the second and first types of conflicting blocks will not commit in LFT2, respectively.

\begin{lemma}\label{safetylem1}
In LFT2, two different blocks cannot be committed at the same height when at most $f$ byzantine nodes exist in a system where there are $3f+1$ nodes.
\end{lemma}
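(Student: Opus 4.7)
The plan is a standard Byzantine quorum intersection argument by contradiction. Suppose, for a contradiction, that two distinct blocks $B_1$ and $B_2$ are both committed at the same height $h$. According to the COMMIT internal transition, a block only becomes a candidate (and therefore is eventually committed when the next candidate replaces it) once a node observes $|same\_vote(in_i)| \geq 2f+1$ matching VOTE messages. So there must exist two quorums $Q_1, Q_2 \subseteq N$ of signed voters with $|Q_1| \geq 2f+1$ and $|Q_2| \geq 2f+1$, where nodes in $Q_1$ signed a VOTE for $B_1$ at height $h$ and nodes in $Q_2$ signed a VOTE for $B_2$ at height $h$.

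Next I would apply inclusion--exclusion in the universe of $3f+1$ nodes to bound the overlap:
\begin{equation*}
|Q_1 \cap Q_2| \;\geq\; |Q_1| + |Q_2| - |N| \;\geq\; (2f+1)+(2f+1)-(3f+1) \;=\; f+1.
\end{equation*}
Since at most $f$ of the nodes are byzantine, the pigeonhole principle forces at least one honest node $i \in Q_1 \cap Q_2$. Such a node must have produced a valid signed VOTE for both $B_1$ and $B_2$ at height $h$ (byzantine signatures of honest identities are excluded by the assumption that byzantine nodes cannot forge other nodes' digital signatures).

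It then remains to derive a contradiction from the fact that an honest node cast two distinct votes at the same height. I would appeal to the RECEIVE$(\langle\text{NEW-BLOCK},B(h,T,V)\rangle)_i$ input transition: its precondition requires $height_i = h-1$, $state_i = ready$, and $proposer(h)=j$, and its effect moves $state_i$ to $process$ and emits exactly one VOTE. Combined with the propose-rule that a leader makes only one block per round, and with the fact that $node\_height(i)$ advances past $h-1$ only through a COMMIT (which uses the same VOTE set), an honest $i$ has no legal transition sequence producing two VOTEs for distinct blocks at height $h$ before its height variable advances.

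The main obstacle, I expect, will be precisely this last step: the formal model allows $state_i$ to return to $ready$ via VOTE\_FAIL or VOTE\_TIMEOUT without $node\_height(i)$ changing, which naively seems to allow an honest node to re-vote at the same height. I would handle this by strengthening the reasoning about what a second RECEIVE(NEW-BLOCK) at the same height requires, namely a second validly signed proposal from $proposer(h)$ in the same round — which honest leaders never issue, and which the protocol's round/height bookkeeping (together with $check\_hash$ and the candidate-block consistency check in the precondition) rules out for a distinct $B'\neq B$ at the same height. All remaining steps are then routine bookkeeping over the transition system.
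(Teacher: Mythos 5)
Your core argument is essentially the same as the paper's, just phrased in the dual form: the paper counts total votes (two commits require $4f+2$ votes, but $3f+1$ nodes with at most $f$ double-voters can supply only $4f+1$), whereas you intersect the two $2f+1$ quorums to find $f+1$ common voters and hence at least one honest double-voter. These are the same inequality rearranged, so the first two steps of your proof match the paper exactly, and to that extent the proposal is correct.

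Where you go beyond the paper is in trying to justify, from the transition system, that an honest node votes at most once per height --- the paper simply asserts ``at most $f$ nodes can vote two times simultaneously'' without argument. You correctly identify the weak point: VOTE\_FAIL and VOTE\_TIMEOUT return $state_i$ to $ready$ without advancing $node\_height(i)$, so the preconditions of RECEIVE(NEW-BLOCK) do not by themselves forbid a second vote at height $h$. However, your proposed patch does not close this: you appeal to the fact that honest leaders issue only one block per round, but the dangerous case is precisely a \emph{byzantine} $proposer(h)$, which can sign two distinct valid blocks at height $h$ and feed them to disjoint sets of honest nodes across successive failed vote attempts; $check\_hash$ and the candidate-block consistency check do not rule this out. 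Closing the gap properly requires an argument the formalization does not really support (e.g., that a failed round increments a round counter so that $proposer$ changes, or that honest nodes record having voted at height $h$ and refuse to re-vote). Since the paper's own proof silently assumes this single-vote property, your proof is no less rigorous than the original, but you should be aware that the obstacle you flagged is real and your sketch of its resolution would not survive a byzantine leader.
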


\begin{proof}
Let's assume two different blocks $B_1$ and $B_2$ committed at height $h$. 
This means that there are at least $2f+1$ nodes who committed block $B_1$ and another at least $2f+1$ nodes who committed block $B_2$. Therefore, this means at least $4f+2$ votes provided by nodes exist.
Because at most $f$ nodes can vote two times simultaneously, at most $4f+1$ votes can exist, which implies that two different blocks cannot be committed at the same height.
\end{proof}

\begin{lemma}\label{safetylem2}
In LFT2, two conflict transactions cannot be committed when at most $f$ byzantine nodes exist in a system where there are $3f+1$ nodes.
\end{lemma}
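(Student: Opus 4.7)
The plan is to argue by contradiction. Assume two conflicting transactions $t_1$ and $t_2$ are both committed, lying in committed blocks $B_1$ at height $h_1$ and $B_2$ at height $h_2$ respectively, with the WLOG ordering $h_1 \leq h_2$. By Lemma~\ref{safetylem1}, the case $h_1 = h_2$ forces $B_1 = B_2$, so the overall proof splits into two cases: (i) both conflicting transactions sit inside a single committed block $B$, and (ii) they sit in two distinct committed blocks at strictly different heights. In both cases the strategy is the same: trace back to the $2f+1$ votes demanded by the COMMIT transition and show that honest validators would have refused to vote, leaving at most the $f$ byzantine votes, a contradiction.

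For case (i), I would appeal directly to the Vote rule of LFT2, which requires validators to ``check the block information'' before emitting a VOTE message. A block that internally carries two mutually inconsistent transactions (double-spend, or otherwise violating the state resulting from applying its transactions in order) is not a valid block, so no honest node enters the RECEIVE$(\langle\mathrm{NEW\text{-}BLOCK},B\rangle)$ effect that adds it to $in_i$ and triggers a VOTE. Consequently at most $f$ votes for $B$ can appear in $\mathit{same\_vote}(in_i)$ for any honest node, which is incompatible with the COMMIT precondition $|same\_vote(in_i)| \geq 2f+1$.

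For case (ii), I would first establish that heights grow monotonically along the commit chain. The COMMIT transition requires $\mathit{height}_i = h-1$ in order for an honest node to vote on a height-$h$ block, and then updates $node\_height(i)$ to the block height of the newly committed $B'$. Combined with Lemma~\ref{safetylem1}, this means that by the time any honest node can vote on $B_2$ at height $h_2 > h_1$, its candidate chain already passes through the unique height-$h_1$ block, namely $B_1$; so the state against which $B_2$ is validated already includes $t_1$. The validation step (``connected with the candidate block, etc.'') therefore detects that $t_2$ conflicts with the $t_1$ already recorded in the chain, and again no honest node emits a VOTE for $B_2$. With at most $f$ byzantine supporters, $B_2$ cannot accumulate the required $2f+1$ votes to be turned into a candidate, hence it can never be committed, contradicting the assumption.

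The main obstacle I expect is the fact that the formalization in Section~\ref{chap:formalization} does not explicitly spell out transaction-level validity inside $check\_hash(\cdot)$ or the Vote precondition; the argument in both cases really rests on the informal rule that honest validators reject a block whose transactions are inconsistent with the prefix they already hold. I would therefore make this explicit as an honest-validator assumption before the proof, so that the chain of reasoning reduces cleanly to the quorum-counting argument already used in Lemma~\ref{safetylem1}, rather than requiring a deeper modification of the automaton.
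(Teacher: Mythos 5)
Your proposal is correct and follows essentially the same route as the paper's own proof: invoke Lemma~\ref{safetylem1} to rule out two committed blocks at the same height, then argue that honest validators refuse to vote for a block whose transactions conflict with the committed prefix, so at most $f$ byzantine votes are available, short of the $2f+1$ quorum. Your version is simply a more careful elaboration --- the explicit case split, the height-monotonicity argument, and the observation that the formal automaton never actually encodes transaction-level validity (so an explicit honest-validator assumption is needed) are all details the paper's two-sentence proof glosses over.
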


\begin{proof}
By Lemma~\ref{safetylem1}, in LFT, a fork (i.e., committing different blocks at the same height) cannot be happen. 
Without a fork, to commit conflict transaction, there should be at least $2f+1$ malicious nodes. 
However, because at most $f$ byzantine nodes can exist, conflict transactions cannot be committed.
\end{proof}

From both lemmas, we show that LFT2 satisfies the safety property.

\begin{theorem}
LFT2 satisfies a safety property.
\end{theorem}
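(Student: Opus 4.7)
The plan is to prove the theorem by directly combining the two lemmas already established, since together they cover exactly the two types of conflicting blocks listed in the definition of safety. First I would recall the definition: safety holds when no conflicting block is ever committed, where conflicting blocks are either (i) blocks containing conflicting transactions such as double-spending or otherwise invalid transactions, or (ii) two distinct blocks committed at the same height.

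Next I would invoke Lemma~\ref{safetylem1} to rule out case (ii): under the assumption of at most $f$ byzantine nodes out of $3f+1$, the counting argument in that lemma shows that two distinct blocks at the same height would require more than $4f+1$ votes, which is impossible. Then I would invoke Lemma~\ref{safetylem2} to rule out case (i): in the absence of forks (guaranteed by Lemma~\ref{safetylem1}), committing two conflicting transactions would require at least $2f+1$ malicious nodes, contradicting the byzantine bound.

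Since every conflicting block must fall into one of these two types, and both have been excluded under the assumption $n \geq 3f+1$, the theorem follows immediately. There is essentially no obstacle to this step: the real technical content is already carried by the two lemmas, and the theorem is a one-line corollary that packages them into a statement about safety as defined. The only care needed is to make the assumption on the byzantine ratio explicit in the theorem's proof, matching the hypotheses of both lemmas.
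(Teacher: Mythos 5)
Your proposal is correct and matches the paper's own proof: both simply combine Lemma~\ref{safetylem1} and Lemma~\ref{safetylem2} to cover the two types of conflicting blocks and conclude safety. The paper adds only the minor observation that committed blocks can never be invalidated afterward, which your argument implicitly relies on as well.
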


\begin{proof}
In LFT2, there is no way to invalidate an already committed block. Therefore, if there is no conflicting block committed, the safety property would satisfy. 
By Lemmas~\ref{safetylem1} and \ref{safetylem2}, there is no conflicting block committed so LFT2 satisfies the safety property.
\end{proof}

\subsection{Liveness}

Next, we analyze liveness of LFT2. 
We first define liveness below. 

\begin{definition}{(Liveness)}
We state that a blockchain system satisfies \textbf{liveness} when a new block is committed without stuck.
\end{definition}

Intuitively, satisfying a liveness property is that a new block will be committed continuously without any stuck.
In other words, liveness asserts that something a good thing eventually happens, where a good thing implies committing a block.

Now, using some assumptions, we prove liveness of LFT2.
Note that we cannot prove liveness without any network assumptions according to FLP impossibility~\cite{fischer1982impossibility}.

\begin{theorem}\label{thm1}
We assumes that every non-byzantine node $i$ is in $state_i=ready$ at some specific time, where $state_i=ready$ implies that node $i$ is ready to enter a new round. 
We also define $d$ as a time difference between when the first node gets the next $ready$ and when the last node gets the next $ready.$ 
Then if both $d+\Delta <ProposeTimeout<\infty$ and $2\Delta < VoteTimeout<\infty$ are satisfied, consensus would complete. 
Furthermore, if the leader is a non-byzantine node, a new block will be committed eventually.
\end{theorem}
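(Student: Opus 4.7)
The plan is to chase the protocol through one round, starting from the synchronized \emph{ready} state and using the two timing bounds to rule out premature timeouts. Write $t_0$ for the time the first non-byzantine node enters \emph{ready}; by hypothesis every non-byzantine node has entered \emph{ready} by $t_0 + d$ and has started its ProposeTimer somewhere in the interval $[t_0, t_0 + d]$. Let $\Delta$ denote the maximum message delay implicit in the multicast channel model. I would split into two cases according to whether the leader is honest, and in each case walk through Algorithms~\ref{alg:in}--\ref{alg:internal} to show the round reaches a terminating action (COMMIT, VOTE\_FAIL, or VOTE\_TIMEOUT), with COMMIT occurring whenever the leader is honest.

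For the propose phase, if the leader is non-byzantine it executes SEND\_BLOCK at some time in $[t_0, t_0 + d]$, and the NEW-BLOCK message reaches every non-byzantine node within $\Delta$ thereafter, i.e., by $t_0 + d + \Delta$. The elapsed time at any non-byzantine node between starting its ProposeTimer and receiving NEW-BLOCK is thus at most $d + \Delta$, so the assumption $\mathrm{ProposeTimeout} > d + \Delta$ guarantees that no non-byzantine node fires SEND\_TIMEOUT first. If instead the leader is byzantine and withholds its proposal, the finite ProposeTimeout at every non-byzantine node eventually fires, triggering SEND\_TIMEOUT; the TIMEOUT messages propagate within $\Delta$, and since there are at least $2f+1$ non-byzantine nodes, each of them eventually receives $2f+1$ matching TIMEOUT messages and executes VOTE\_TIMEOUT, ending the round (this establishes the first half of the theorem even without COMMIT).

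For the vote phase with an honest leader, every non-byzantine node emits a VOTE for the same block by $t_0 + d + \Delta$ at the latest, and these votes reach every non-byzantine node within another $\Delta$, hence by $t_0 + d + 2\Delta$. Since at least $2f+1$ non-byzantine nodes vote identically, each such node accumulates the required $2f+1$ matching VOTE entries, and the window between the first VOTE arriving at a node and the last matching VOTE arriving is bounded by $2\Delta$. The assumption $\mathrm{VoteTimeout} > 2\Delta$ therefore prevents VOTE\_TIMEOUT or VOTE\_FAIL from pre-empting COMMIT, and the preconditions of COMMIT in Algorithm~\ref{alg:internal} become satisfied at every non-byzantine node, committing the previous candidate and installing the new one. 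This yields the \emph{eventually} committed block claimed in the theorem.

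The main obstacle will be pinning down exactly when VoteTimer starts, since the informal rules only say that it activates \emph{when enough vote messages arrive but consensus isn't completed}, which is ambiguous between ``the first vote arrives'' and ``a quorum threshold is met with conflicting contents.'' I would adopt the strictest reading, that the timer starts as soon as the first VOTE is received, and then verify that the remaining matching votes arrive within $2\Delta$ under the above schedule; any weaker reading only makes the budget larger. A secondary subtlety is the role of byzantine VOTE or TIMEOUT messages that could arrive early and attempt to trigger VOTE\_FAIL, but since $f < 2f+1$ they cannot by themselves constitute a $same\_vote$ quorum, and the honest majority dominates within the VoteTimeout window.
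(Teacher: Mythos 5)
Your proposal is correct and follows essentially the same route as the paper's proof: the same worst-case timing analysis yielding the bound $d+\Delta$ for the propose phase and $2\Delta$ for the vote phase, with finiteness of both timeouts needed to escape a withholding or equivocating leader. The only differences are presentational --- you argue the honest-leader commit directly and make the byzantine-leader round termination via VOTE\_TIMEOUT explicit, whereas the paper establishes the ``furthermore'' part by contradiction (any timeout occurrence would imply the leader is byzantine) --- and your observation about the ambiguity of when VoteTimer starts is a fair point the paper glosses over.
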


\begin{proof}
Assume that both nodes $A$ and $B$ are non-byzantine, and nodes $A$ and $B$ are the first and last nodes gotten state $ready,$ respectively. 
That is, node $A$ first got $state_A=ready,$ and node $B$ lastly got $state_B=ready.$
Let $t$ be the time when node $A$'s state became $ready$, i.e. the time when node $B$'s state became $ready$ would be $t+d$ according to the definition of $d.$ 
At time $t$, a proposer timer of node $A$ works until $t+ProposeTimeout$. 
If node $A$ didn't receive the NEW-BLOCK message until $t+ProposeTimeout$, then the node $A$ will send TIME-OUT message. 
To prove the theorem, let's consider the worst case. 
The worst case is when the new round leader is node $B$, which means that a new round leader is the latest node whose state became $ready$. 
To complete consensus, NEW-BLOCK message should arrive at node $A$ until $t+ProposeTimeout$. 
However, because we assumed that node $B$ is the leader and the time when $state_B=ready$ is $t+d$, the maximum time when node $A$ receives NEW-BLOCK will be $t+d+\Delta$. 
As a result, to complete the consensus,
\begin{eqnarray*} t+d+\Delta <t+ProposeTimeout\end{eqnarray*} 
should be met.

After sending a NEW-BLOCK message, if a vote timer of node $B$ works until $t+d+VoteTimeout$. 
If node $B$ doesn't receive enough VOTE messages until $t+d+VoteTimeout$, then node $B$ would think that consensus is failed and it'd go to next step. 
Thus, to complete the consensus, enough VOTE messages should arrive at node $B$ until $t+d+VoteTimeout$. 
The worst case in this situation is that the VOTE message sender receives NEW-BLOCK message at $t+d+\Delta$ and then sends the VOTE message but the VOTE message arrives at $t+d+\Delta+\Delta$ to $B$. 
As a result, to complete the consensus,\begin{eqnarray*} t+d+2\Delta <t+d+VoteTimeout\end{eqnarray*} should be satisfied. 

Obviously, if there is no $ProposeTimeout$ and a leader is malicious, then the leader may not send a message, so replicas never get NEW-BLOCK message. 
This means that consensus will be never completed. 
Similarly, if there is no $VoteTimout$ and a leader is malicious, then the leader can generate a conflicting NEW-BLOCK message, so replicas never receive enough VOTE messages. 
This also makes consensus never complete. 
Thus, 
\begin{eqnarray*} 
ProposeTimeout < \infty \\VoteTimeout < \infty 
\end{eqnarray*} 
should be satisfied. 
Finally, if the following equation
\begin{eqnarray*} 
d+\Delta < ProposeTimeout < \infty \\ 2\Delta < VoteTimeout < \infty 
\end{eqnarray*}
is satisfied, then the consensus is completed.

After the consensus is completed, we assume that a new block is not committed, which means that timeout occurs at some point. 
Also, we assume that $ProposeTimeout$ occurs. 
Since the number of byzantine nodes is less or equal to $f$ and $d+\Delta < ProposeTimeout< \infty$ is satisfied, 
the only case that $ProposeTimeout$ can occur is when the leader is byzantine, which is a contradiction. 

Next, let's assume $VoteTimeout$ occurs. 
Similarly, since the number of byzantine nodes is less or equal to $f$ and $2\Delta < VoteTimeout< \infty$ is satisfied, the only case that $VoteTimeout$ can occur is when the leader is malicious. 
The $VoteTimeout$ occurrence means that not enough VOTE messages arrive but no more than $f$ number of byzantine nodes exist. 
This means that conflicting NEW-BLOCK messages are propagated to each node. 
The only node who can make conflicting NEW-BLOCK messages is the leader.
This is because if one of other nodes makes conflicting NEW-BLOCK messages, the message would be filtered by each node using $proposer$ function. 
In conclusion, both timeout cases can occur only when the leader is byzantine, but this is a contradiction. 
Therefore, if the leader is a non-byzantine node, a new block would be committed.
\end{proof}

Moreover, we define $\gamma-function$ to measure how good LFT2 has  liveness.

\begin{definition}
Let $\nu$ be the total number of faulty nodes in the LFT2 system, we define $\gamma-function$ (i.e., $\gamma (\nu)$) as follows: 
\begin{eqnarray*} 
\gamma(\nu) = \lim\limits_{m \to \infty} {\mathbb{E}[X_{m,\nu}]},\ where\ X_{m,\nu}={committed\_block(m,\nu)\over m}
\end{eqnarray*} 
The term of $committed\_block(m,\nu)$ indicates the number of committed blocks during changing a leader $m$-times if the system has $\nu$ faulty nodes.
\end{definition}

Then we prove the minimum average rate of generating committed blocks is $\gamma (f)$ in LFT2.

\begin{lemma}\label{lemma1}
If $\nu \leq f$, $d+\Delta <ProposeTimeout<\infty$ and $2\Delta < VoteTimeout<\infty$ are satisfied at each round, then $\gamma (\nu)= \cfrac{n-\nu}{n}$. Here, $d$ indicates a time difference between when the first node gets the next $ready$ and when the last node gets the next $ready$.
\label{lem:gamma}
\end{lemma}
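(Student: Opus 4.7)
The plan is to combine Theorem~\ref{thm1} with a counting argument over the leader-rotation schedule. By Theorem~\ref{thm1}, under the stated timing conditions ($d+\Delta < ProposeTimeout < \infty$ and $2\Delta < VoteTimeout < \infty$), every round whose leader is non-byzantine terminates with a newly committed block, because neither $ProposeTimeout$ nor $VoteTimeout$ can fire in that case. Conversely, the worst-case byzantine leader either withholds its proposal (forcing $ProposeTimeout$) or broadcasts conflicting NEW-BLOCK messages (forcing $VoteTimeout$), and in that case no block is committed in its round; this is the scenario that pins down the \emph{minimum} average rate the text refers to just before the lemma.

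First I would make explicit the leader-rotation assumption implicit in the $proposer(\cdot)$ function: the leader schedule cycles through the $n$ nodes uniformly, so that over $m$ leader changes each node serves exactly $m/n$ times in expectation. Under this assumption, among $m$ consecutive leader changes the expected number of rounds with a non-byzantine leader is $m(n-\nu)/n$ and the expected number of rounds with a byzantine leader is $m\nu/n$.

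Next I would substitute into the definition of $X_{m,\nu}$. Combining the two observations above, in the worst case the committed-block count over $m$ leader changes equals the number of rounds led by a non-byzantine node, so
\begin{eqnarray*}
\mathbb{E}[\,committed\_block(m,\nu)\,] \;=\; m\cdot\frac{n-\nu}{n},
\end{eqnarray*}
hence $\mathbb{E}[X_{m,\nu}] = (n-\nu)/n$ for every $m$, and the limit as $m\to\infty$ gives $\gamma(\nu)=(n-\nu)/n$.

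The main obstacle I expect is justifying that byzantine leaders contribute zero committed blocks rather than possibly some, since a byzantine node is free to behave honestly in a given round. The cleanest way to handle this is to interpret $\gamma(\nu)$ as a minimum over adversarial strategies (consistent with the sentence preceding the lemma) and fix the strategy in which every byzantine leader deliberately triggers a timeout. A secondary subtlety is confirming that no honest round is ever wasted; this is exactly the content of Theorem~\ref{thm1}, which under the given $ProposeTimeout$ and $VoteTimeout$ bounds rules out both timeout types whenever the leader is non-byzantine, so the counting argument above goes through without loss.
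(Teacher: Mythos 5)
Your proposal follows essentially the same skeleton as the paper's proof: invoke Theorem~\ref{thm1} to conclude that every round with a non-byzantine leader commits a block under the stated timeout bounds, then count committed blocks over the round-robin leader schedule. The one place you diverge is in handling the fact that $m$ need not be a multiple of $n$: you assert $\mathbb{E}[committed\_block(m,\nu)] = m(n-\nu)/n$ \emph{exactly} for every $m$, which requires an extra uniformity assumption (e.g., a uniformly random starting offset in the rotation); under a deterministic round-robin with a fixed starting point, a partial cycle of length $r$ can contain anywhere from $\max(0,r-\nu)$ to $\min(r,n-\nu)$ honest-leader slots, so the equality fails for finite $m$. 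The paper avoids this by writing $m = nq + r$ and squeezing $\mathbb{E}[committed\_block(m,\nu)]$ between $q(n-\nu)$ and $(q+1)(n-\nu)$ using monotonicity in $m$, which needs no randomization and still yields the limit $(n-\nu)/n$. Your version is shorter but slightly overstates the intermediate step; the sandwich argument is the more robust route, and you could repair your write-up by replacing the exact equality with the same two-sided bound. On the other hand, you are more careful than the paper on a point it glosses over: the paper simply asserts $\mathbb{E}[committed\_block(n,\nu)] = n-\nu$, implicitly assuming byzantine leaders never commit, whereas you explicitly justify this by fixing the worst-case adversarial strategy consistent with $\gamma$ being a minimum rate.
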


\begin{proof}
LFT2 uses a round-robin type of a leader selection algorithm. 
Therefore, each node would be a leader once during changing a leader $n$ times. 
By Theorem~\ref{thm1}, if a leader is a non-byzantine node, a new block would be committed by consensus. 
Thus, 
\begin{eqnarray*}\mathbb{E}[committed\_block(n,\nu)] = n-\nu\end{eqnarray*} 
is satisfied. 
By the division theorem, we can represent $m$ as follows:
\begin{eqnarray*} m=nq+r, \ where \ 0\leq r < n \end{eqnarray*} 
Using both results, we can represent $\mathbb{E}[committed\_block(m,\nu)]$ as follows:

\begin{align*} 
\mathbb{E}[committed\_block(m,\nu)] &=\mathbb{E}[committed\_block(nq+r,\nu)]\\ \Rightarrow q(n-\nu) &=\mathbb{E}[committed\_block(nq,\nu)] \\ &\leq\mathbb{E}[committed\_block(m,\nu)] \\&\leq \mathbb{E}[committed\_block(n(q+1),\nu)] \\&=(q+1)(n-\nu) \end{align*} 

\begin{eqnarray*}\therefore \ q(n-\nu)\leq \mathbb{E}[committed\_block(m,\nu)] \leq (q+1)(n-\nu)\end{eqnarray*} 

Therefore, we can represent $\mathbb{E}[X_{m,\nu}]$ as below: 

\begin{eqnarray*}\mathbb{E}[X_{m,\nu}] =\mathbb{E}\left[{committed\_block(nq+r,\nu)\over nq+r}\right]\end{eqnarray*} \\ \begin{eqnarray*} \therefore {q(n-\nu) \over nq+r}\leq \mathbb{E}[X_{m,\nu}]\leq {(q+1)(n-\nu) \over nq+r}\end{eqnarray*}
Thus, $\gamma (\nu)$ as follow: 
\begin{align*} \gamma (\nu) &=\lim\limits_{m \to \infty} {\mathbb{E}[X_{m,\nu}]}\\ &= \lim\limits_{q \to \infty} {q(n-\nu) \over nq+r}\\ &= \lim\limits_{q \to \infty} {q(n-\nu)+(n-\nu) \over nq+r}\\ &={n-\nu \over n}
\end{align*}
This completes the proof.
\end{proof}

\begin{theorem}\label{livenessmetric}
In LFT2, if $d+\Delta <ProposeTimeout<\infty$ and $2\Delta < VoteTimeout<\infty$ are satisfied at each round, then the following inequality is satisfied.
\begin{eqnarray*} \lim\limits_{m \to \infty} {\mathbb{E}[X_m]}\geq \gamma(f),\ where\ X_m={committed\_block(m)\over m}\end{eqnarray*}
Here, $d$ indicates a time difference between when the first node gets the next $ready$ and when the last node gets the next $ready$. 
Parameter $committed\_block(m)$ represents the number of committed blocks during changing a leader $m$ times when the system has at most $f$ faulty nodes.
\end{theorem}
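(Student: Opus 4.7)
The plan is to reduce Theorem~\ref{livenessmetric} directly to Lemma~\ref{lem:gamma} by a monotonicity argument in the actual byzantine count. The theorem only asserts the upper bound $\nu \leq f$ on the number of faulty nodes, but in any concrete execution there is a fixed actual count $\nu \in \{0, 1, \ldots, f\}$. Once this $\nu$ is fixed, the quantity $X_m$ of the current theorem coincides with the quantity $X_{m,\nu}$ defined in Lemma~\ref{lem:gamma}, since under the additional assumption that exactly $\nu$ of the $n$ replicas are byzantine we have $committed\_block(m) = committed\_block(m,\nu)$.

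First I would verify that the hypotheses of Lemma~\ref{lem:gamma}, namely $d+\Delta < ProposeTimeout < \infty$ and $2\Delta < VoteTimeout < \infty$, are directly inherited from the hypotheses of the present theorem, and that the condition $\nu \leq f$ of the lemma is automatic. Second, applying the lemma to the actual $\nu$ yields
\begin{eqnarray*}
\lim_{m\to\infty} \mathbb{E}[X_m] \;=\; \gamma(\nu) \;=\; \frac{n-\nu}{n}.
\end{eqnarray*}
Third, I would observe that $(n-\nu)/n$ is decreasing in $\nu$, so it is minimized on $\{0,\ldots,f\}$ at $\nu = f$, giving $(n-\nu)/n \geq (n-f)/n = \gamma(f)$. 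Chaining these two steps proves the inequality $\lim_{m\to\infty}\mathbb{E}[X_m] \geq \gamma(f)$.

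There is no genuinely deep step here; the heavy lifting was already done in Theorem~\ref{thm1} (every non-byzantine leader produces a committed block) and in Lemma~\ref{lem:gamma} (the round-robin schedule plus the division-theorem bookkeeping). The only subtlety I anticipate is making the identification $X_m = X_{m,\nu}$ fully rigorous: one must argue that once the adversary fixes any set of at most $f$ byzantine replicas, the round-robin rotation hands the leader role to each node with the same long-run frequency $1/n$, so that the expectation computation reduces verbatim to the one in Lemma~\ref{lem:gamma}. With that identification in hand the proof closes in one line by monotonicity.
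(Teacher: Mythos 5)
Your reduction captures the same key ingredient as the paper --- the monotonicity $\gamma(0)\geq\gamma(1)\geq\dots\geq\gamma(f)$ coming from Lemma~\ref{lem:gamma} --- but it proves a strictly weaker statement than the one the paper's proof targets. You assume that the adversary fixes a single corruption set once and for all, so that some constant $\nu\leq f$ describes every round and $committed\_block(m)=committed\_block(m,\nu)$. The paper instead reads ``at most $f$ faulty nodes'' as allowing the per-round faulty count to vary: its proof introduces $m_i$, the number of rounds (out of $m$) in which exactly $i$ faulty nodes exist, writes $m=\sum_{i=0}^{f}m_i$, decomposes $X_m$ as the weighted sum $\sum_i (m_i/m)X_{m_i,i}$, applies linearity of expectation, and only then uses the monotonicity of $\gamma$ to lower-bound every term by $\gamma(f)$. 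Under your static-corruption assumption the identification $X_m=X_{m,\nu}$ is legitimate and your one-line argument $\lim_m\mathbb{E}[X_m]=\gamma(\nu)=(n-\nu)/n\geq (n-f)/n=\gamma(f)$ is clean and correct; it just does not cover executions in which the number of byzantine nodes changes over time, which is precisely the case the paper's $m_i$-decomposition exists to handle.

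If you want to close the gap, you need the paper's extra step (or an equivalent one): partition the $m$ rounds by their actual faulty count, bound the expected commit rate within each class by $\gamma(i)\geq\gamma(f)$ via Lemma~\ref{lem:gamma}, and recombine using $\sum_i m_i/m=1$. Note that this step has its own delicacy that neither you nor the paper fully addresses --- applying Lemma~\ref{lem:gamma} to the subsequence of rounds with $i$ faulty nodes presumes both that $m_i\to\infty$ (which need not hold for every $i$) and that the round-robin leader rotation restricted to that subsequence still visits each node with frequency $1/n$ --- but the structural difference between your argument and the paper's is the treatment of a time-varying versus fixed corruption count.
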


\begin{proof}
Let $m_i$ be the number of rounds in which $i$ faulty nodes exist during changing a leader $m$ times. 
Obviously, $m=\sum\limits_{i=0}\limits^f m_i.$ 
Here, note that when a leader changes, one round passes.
Then we can represent $X_m$ as follow:

\begin{align*} 
X_m&=X'_{m,0}+X'_{m,1}+\dots+X'_{m,f-1}+X'_{m,f}\\&={m_0 \over m} X_{m_0,0}+{m_1 \over m} X_{m_1,1}+\dots +{m_{f-1} \over m} X_{m_{f-1},f-1} +{m_f \over m} X_{m_f,f}, \\&where \ X'_{m,\nu}={partially\_committed\_block(m,\nu)\over m} 
\end{align*}

The term $partially\_committed\_block(m,\nu)$ means the number of committed blocks when the system has $\nu$ faulty nodes during changing a leader $m$-times. 
Using the linearity of expectation, the following equation is satisfied:
\begin{eqnarray*} \mathbb{E}[X_m]= {m_0 \over m}\mathbb{E}[X_{m_0,0}]+ {m_1 \over m}\mathbb{E}[X_{m_1,1}] +\dots +{m_{f-1} \over m}\mathbb{E}[X_{m_{f-1},f-1}] +{m_f \over m}\mathbb{E}[X_{m_f,f}]\end{eqnarray*}
By Lemma~\ref{lemma1}, $$\gamma(0)\geq\gamma(1)\geq\dots\geq\gamma(f-1)\geq\gamma(f)$$ is satisfied. 
Therefore, the following inequality is hold:
\begin{align*} \lim\limits_{m \to \infty}\mathbb{E}[X_m]&= \lim\limits_{m \to \infty}\left( {m_0 \over m}\gamma(0)+ {m_1 \over m}\gamma(1) +\dots +{m_{f-1} \over m}\gamma(f-1) +{m_f \over m}\gamma(f)\right)\\&\geq \lim\limits_{m \to \infty}\left( {m_0 \over m}+ {m_1 \over m} +\dots +{m_{f-1} \over m} +{m_f \over m}\right)\gamma(f)\\ &=\gamma(f)\end{align*}
This completes the proof.
\end{proof}

\section{Comparative Analysis}

In this section, we compare LFT2 with other two consensus protocols. 
From this comparison, we study trade-offs among the three consensus algorithms.

\subsection{PBFT vs LFT2 vs Chained Hotstuff}

\smallskip\noindent\textbf{Scalability. } 
LFT2~\cite{lft2_white} is between PBFT~\cite{castro1999practical} and Hotstuff~\cite{yin2018hotstuff} logically.
When broadcasting a vote message, LFT2 is similar to PBFT. 
On the other hand, because it doesn't commit a new block in one round, we can say that LFT2 is similar to Chained Hotstuff. 
Without any byzantine nodes, the average number of phases required for committing a block is one, two, and three for Chained Hotstuff, LFT2, and PBFT, respectively. 
All three consensus algorithms need three steps during one block committing process, but the average number of phases required to commit a block depends on how they apply pipelining technique.
For this reason, in terms of scalability, Chained Hotstuff has the best performance while PBFT is the worst.

\smallskip\noindent\textbf{Network Bandwidth. } 
PBFT has three phases in one round; a leader first broadcasts a message to other nodes, the nodes vote the message received from the leader in the second phase, and the nodes send the pre-commit message based on the received votes in the last phase. 
Using big-O notation, the first, second, and third phases would have the network bandwidth complexity of $O(N),\ O(N^2),$ and $O(N^2)$, respectively.
Here, $N$ means the number of nodes in the consensus system.
LFT2 has two phases where a leader first broadcasts a message to other nodes and then the nodes vote the message that the leader sent in the second phase. 
Using big-O notation, the first and second phases have the network bandwidth complexity of $O(N)$ and $O(N^2)$, respectively. 
Chained Hotstuff has only one phase where a leader first broadcasts a message to other nodes and then, similar to a voting process, each node responses to the leader.
Using big-O notation, each phase would have the network bandwidth complexity of $O(N)$.
Considering the above, we find out that Chained Hotstuff has the best network bandwidth complexity, compared to PBFT and LFT2. 
In addition, even though PBFT and LFT2 have the same network bandwidth complexity in terms of Big-O notation, 
we can state that LFT2 has better network bandwidth complexity than PBFT because LFT2 and PBFT have two phases and three phases on average, respectively.

\smallskip\noindent\textbf{Decentralization. } 
From the decentralization point of view, PBFT is the best because only one of the three phases is involved with a leader. 
In PBFT, only pre-prepare phase is a leader-dependent phase, but the other phases, prepare and commit phases, are leader-independent phases. 
Therefore, a proportion of leader-involved phases is 33.3\%, where a proportion of leader-involved phases indicates the ratio of the number of leader-involved phases to the number of phases in one round.
In Chained Hotstuff, however, every phase is involved with a leader. 
Every phase of Chained Hotstuff is symmetric, and a leader collects the vote messages, so a proportion of leader-involved phases is 100\%.
We can also consider that LFT2 is positioned between PBFT and Chained Hotstuff. 
In LFT2, the propose phase is the leader-dependent phase but vote phase is the leader-independent phase, which implies that a proportion of leader-involved phases is 50\%. 
Considering the above, we find out that PBFT has the best decentralization, LFT2 is the second, and Chained Hotstuff is the worst decentralization level.

\begin{table}[h]
\caption[Comparison among PBFT, LFT2 and Chained Hotstuff]{Comparison among the three consensus algorithms}
\vspace{2mm}
\centering
\begin{tabular}{|l|l|l|l|}
\hline
 & PBFT & LFT2 & Chained Hotstuff \\ \hline
\begin{tabular}[c]{@{}l@{}}Scalability 
(Average number of phases \\ per a committed block)\end{tabular} & Worst (3)& Second (2) & Best (1) \\ \hline
Network Bandwidth (Big-O notation)& Worst ($O(N^2)$) & Second ($O(N^2)$) & Best ($O(N)$)\\ \hline
\begin{tabular}[c]{@{}l@{}}Decentralization \\ (Proportion of leader-involved phases)\end{tabular}  & Best (33.3\%)    & Second (50\%) & Worst (100\%)    \\ \hline
\end{tabular}%
\end{table}

\section{Simulation}

In this chapter, we simulate the LFT2 consensus algorithm using LFT2 implementation~\cite{LFT2}. 
Our simulation measures $\gamma-function$ by varying two timeouts: $Propose Timeout$ and $Vote Timeout.$ 
In our simulation, 1) we investigated the relationship between timeout and $\gamma-function$ by varying the number of nodes, and 2) the relationship between timeout and $\gamma-function$ by varying the number of failure nodes. 
We set $Propose Timeout$ and $Vote Timeout$ to the same value, as ICON LOOP~\cite{ICON} set them, and simulate the timeout range from 0 to 4 seconds in 0.1 second increments.

\subsection{Methodology}

The LFT2 implementation provides a simulation tool with a system console, which can control the simulation environments. 
In the original implementation code, the network delay was set to a random value between 0 and 1 second, and the $Propose Timeout$ and $Vote Timeout$ was set to 2 seconds. 
However, in the real network, delay doesn't follow the random function, so we had to change the network delay model to a similar one to real. 
Because network delay changes dynamically for various reasons, we choose to collect real data rather than modeling network delay theoretically. 
We collected a data for network delay in a website~\cite{Bitcoin_Monitoring}, which provides some network information about Bitcoin. 
We use the `current block propagation delay distribution' of 3 February, 2020. 
The delay distribution represents the elapsed time between the first reception of an INV message announcing a new block and the subsequent reception from other peers in the Bitcoin network. 
The delay information contains some delay data that is over 4 seconds, but the most of data is less than 4 seconds. 
Thus, we ignore the data which is over 4 seconds. 
The probability distribution of the Bitcoin network delay is shown in Figure ~\ref{fig:delay_distribution}.

\begin{figure}[ht]
    \centerline{\includegraphics[width=\columnwidth]{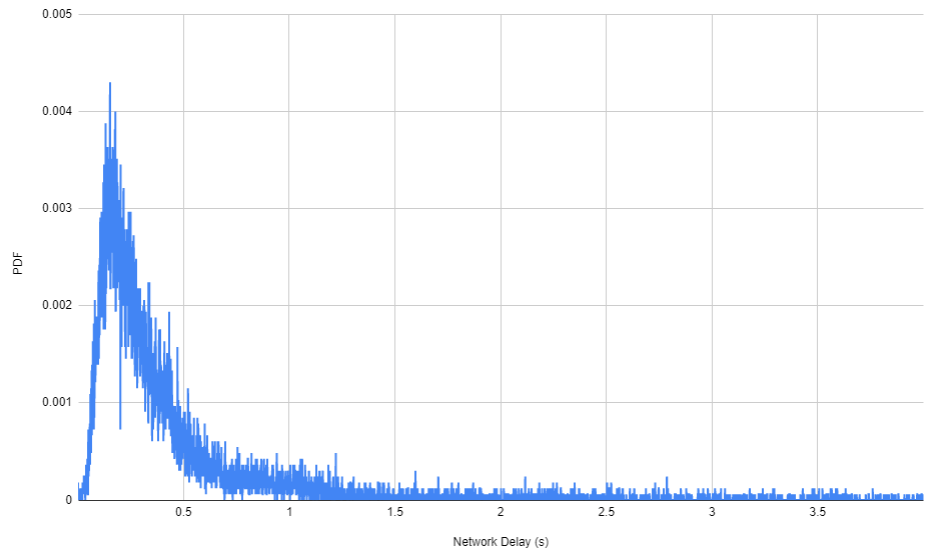}}
    \caption{Current block propagation delay distribution of the Bitcoin network on 3 February, 2020}
    \label{fig:delay_distribution}
\end{figure}

Using the simulation tool provided by ICON LOOP~\cite{LFT2}, we can measure $\gamma-function$ under various settings.
We do an experiment by changing two variables: the number of nodes in the LFT2 network and the number of failure nodes
In both cases, we measure the value of $\gamma-function$ with varying a timeout from 0 to 4 seconds in 0.1 second increments. 
Specifically, in the first, we measure $\gamma-function$ under four different settings of the total number of nodes, 4, 10, 50, and 100. 
In the second case, we measure $\gamma-function$ varying the number of failed nodes when total number of noes in LFT2 is 21. 
We calculated $\gamma-function$ by running the LFT2 simulator until more than 200 blocks are created.

\subsection{Result}

\begin{figure}[h]
    \centering
    \includegraphics[width=\columnwidth]{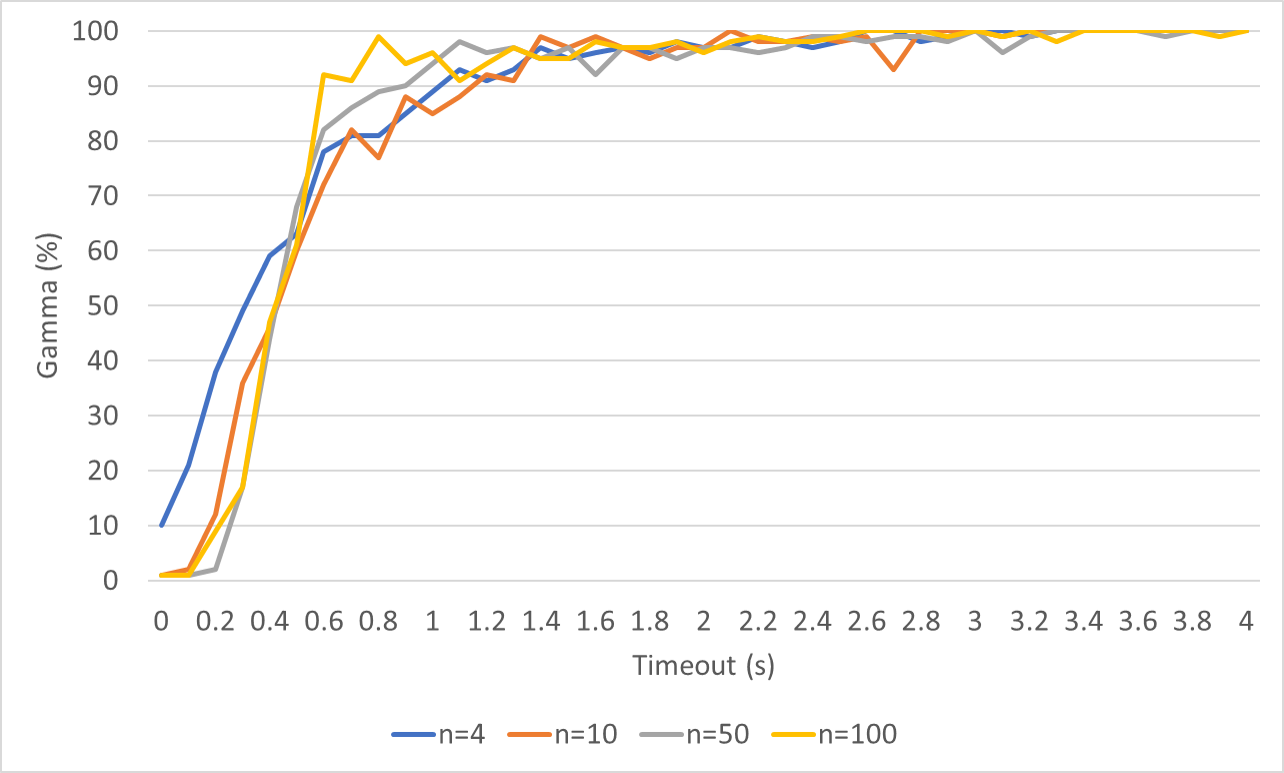}
    \caption{$\gamma$-function with varying timeout and the number of nodes}
    \label{fig:normal}
\end{figure}

\begin{figure}[h]
    \centering
    \includegraphics[width=\columnwidth]{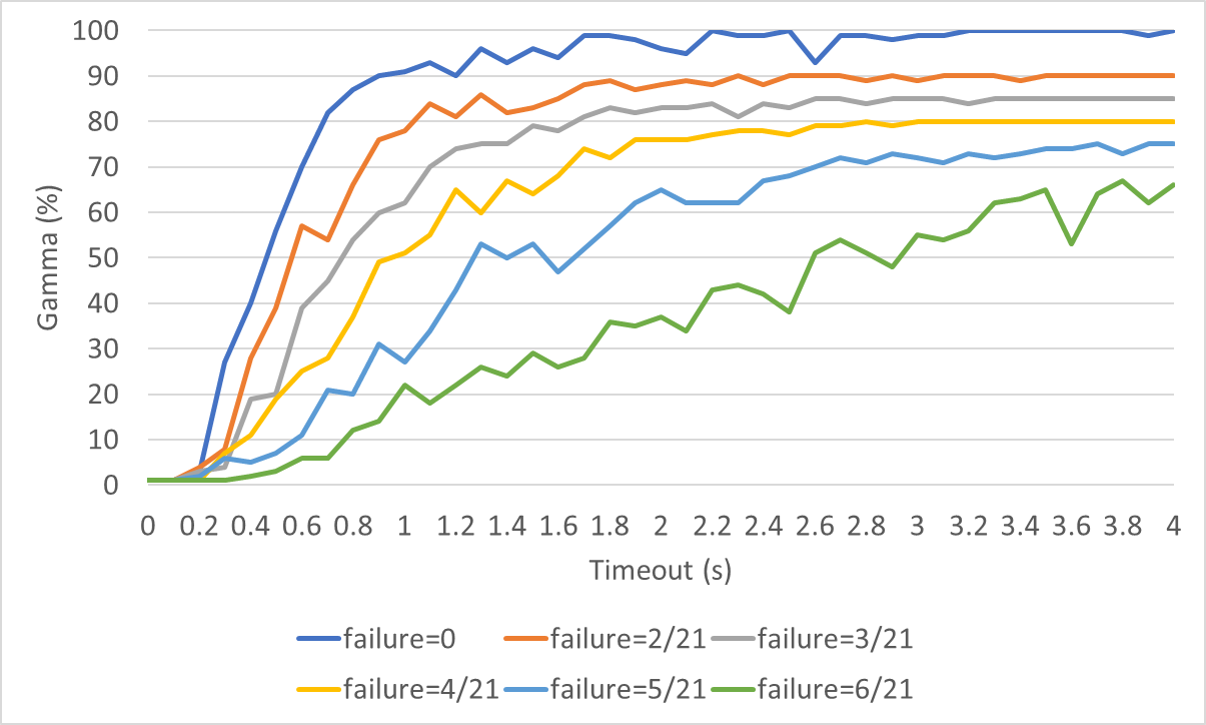}
    \caption{$\gamma$-function with varying timeout and the number of failure nodes}
    \label{fig:failure}
\end{figure}

\smallskip\noindent\textbf{The number of nodes. } 
Values of $\gamma-function$ with varying the number of nodes is represented in Figure~\ref{fig:normal}. 
In this figure, we can see that values of $\gamma-function$ are similar even when the number of nodes is different, which implies that $\gamma-function$ is rarely influenced by the number of nodes. 
Also, the value of $\gamma-function$ rises sharply between 0.2 and 0.6 seconds because most of network delay is concentrated in 0.1 to 0.3 seconds (refer to Figure~\ref{fig:delay_distribution}). 
This supports that our timeout condition in the liveness proof is reasonable when we set the value of timeout to 2 seconds. 
Finally, from Figure~\ref{fig:normal}, we observe that $\gamma-function$ converges to 100\% at timeout of greater than 2 seconds, which means that without any failure node, LFT2 commits all blocks properly.

\smallskip\noindent\textbf{Failure. } 
We also simulate LFT2 by varying the extent of failure, where the number of all nodes is set to 21. 
Figure~\ref{fig:failure} represents the value of $\gamma-function$ when the number of failed nodes is from zero to six out of 21 and timeout is from 0 to 4 seconds.
As shown in the figure, one can see $\gamma-function$ converges to $1-failure$ as timeout increases, where $failure$ indicates a fraction of failure nodes to all nodes.
This conforms with Lemma~\ref{lem:gamma}.
When the value of $failure$ is less than or equal to $\frac{4}{21},$ $\gamma-function$ converges within an error range of about 2\% at timeout of $2.7$ s.
Meanwhile, when five and six nodes are failed, timeout is required to be greater than 3.9 s and 5.3 s, respectively, for $\gamma-function$ to converge. 
Table~\ref{tab:converge} represents timeout when the value of $\gamma-function$ reaches to the value of convergence for the first time, under various settings of $failure$. 
In Table~\ref{tab:converge}, one can see that the greater the value of $failure$ is, the greater the timeout when $\gamma-function$ reaches to the value of convergence for the first time.

\begin{table}[h]
\caption{Comparison among the three consensus algorithms}
\vspace*{2mm}
\centering
\begin{tabular}{|p{2cm}|p{1cm}|p{1cm}| p{1cm}|p{1cm}|p{1cm}|p{1cm}|}
\hline
Failure & 0 & 2/21 & 3/21 & 4/21 & 5/21 & 6/21 \\ \hline
Timeout (s) & 2.2 & 2.3 & 2.6 & 2.8 & 3.7 & 4.9\\ \hline
\end{tabular}
\label{tab:converge}
\end{table}

\section{Conclusion} 

Many blockchain consensus algorithms have been developed, and one of most important properties of the consensus algorithm is safety and liveness. 
In this paper, we analyze LFT2 used in ICON. 
To do this, we formalize the protocol and analyze safety and liveness of LFT2.
It requires a certain network assumption to prove liveness. 
Therefore, to decide if this assumption is reasonable, we simulate LFT2 and measure liveness quality by using our metric, $\gamma-function.$ 
This shows that when we set timeout to a sufficiently large value (about 4 seconds), a high level of liveness can be guaranteed.

\bibliographystyle{splncs04}
\bibliography{references}

\end{document}